\newtheorem{thm}{Theorem}[section]
\newtheorem{defin}[thm]{Definition}
\newtheorem{remark}[thm]{Remark}
\newcommand\Img{\operatorname{\mathcal Im}}
\newcommand\iso{\kern.35em{\raise3pt\hbox{$\sim$}\kern-1.1em\to}\kern.3em}
\newcommand\F{\mathbb F}
\newcommand\Pp{\mathbb P}
\newcommand\m{\mathfrak m}
\newcommand\Cc{\mathcal C}
\newcommand\Oc{\mathcal O}
\newcommand\Res{\operatorname {Res}}
\newcommand\Proj{\operatorname {Proj}}
\newcommand\df{\operatorname {\it d_{free}}}
\newcommand\ord{\operatorname {order}}
\begin{document}
\title{One dimensional Convolutional Goppa Codes over the projective line }
\author{J.A. Dom\'{\i}nguez P\'erez, J.M. Mu\~noz Porras and
G. Serrano Sotelo}
\thanks {This work was partially supported by the research contracts MTM2009-11393 of the Spanish Ministry for Science  and Innovati\'on.}
\thanks {J.A.~Dom\'{\i}nguez~P\'erez, J.M.~Mu\~noz~Porras and
G.~Serrano~Sotelo are in the Department of Mathematics, University of Salamanca, Plaza de la Merced 1-4, 37008 Salamanca, Spain (email: jadoming@usal.es, jmp@usal.es,  and laina@usal.es)}


\maketitle

\begin{abstract}

We give a general method to construct MDS one-dimensional convolutional codes. Our method generalizes previous constructions \cite{HGBL06}. Moreover we give a classification of one-dimensional Convolutional Goppa Codes and propose a characterization of MDS codes of this type.

\end{abstract}

\section*{Introduction}
One of the main problems in coding theory is the construction of codes with a large distance, such as so-called MDS codes.

The aim of this paper is to give a very general method to construct one-dimensional 
MDS convolutional codes using the techniques developed in our previous papers \cite{DMPS06,DMPS04,DMS:08}.

In Section \ref{sec:conv} we give a general introduction to convolutional codes, reformulated in terms that enables a good understanding of the choices of the generator matrices and the submodules generated by them. The treatment is fairly self-contained, with only a few references for proofs of certain statements. Moreover,  a characterization of one-dimensional MDS convolutional codes in terms of their associated block linear codes is given (Theorem \ref{linearG}).

In Section \ref{sec:goppa} we describe the notion of Convolutional Goppa Code, introduced in \cite{DMPS06} and  \cite{DMPS04} and we recall the construction of convolutional Goppa codes over the projective line. We use this construction in Section \ref{sec:convgoppa} to give families of examples of one-dimensional convolutional Goppa codes; moreover, we prove, using Theorem \ref{linearG}, that they are MDS. This is the main result in this paper. The examples constructed in \cite{HGBL06} are particular cases of ours.

Finally, we give in Section \ref{sec:class} a classification of one-dimensional convolutional Goppa codes defined over the projective line, which could give rise to a characterization of MDS convolutional Goppa codes of dimension one.

\section{Convolutional codes}\label{sec:conv}

Given a finite field $\F_q$, representing the symbols in which an
information word $u\in \F_q^k$ is written, each $k\times n$ matrix of rank $k$ with entries in $\F_q$ defines an injective linear map
$$
\begin{aligned}\F_q^k &\xrightarrow{\mathcal G} \F_q^n
\\
u&\mapsto x=u\mathcal G\,,\end{aligned}
$$
whose image subspace is the linear code $\Cc=\Img \mathcal G\subseteq
\F_q^n$ of length  $n$, dimension $k$, and rate $k/n$. $\mathcal G$ is called a generator matrix of the code, and $\mathcal G'$ is another generator matrix of the code if there exists an element $B\in GL(k,\F_q)$ such that  $\mathcal G'=B\cdot\mathcal G$.

In practical applications, the codification process is not limited
to a single word, but to a sequence of information words depending on
time, $u_t\in \F_q^k$, $t\geq0$, which after the codification
are transformed into the sequence of codified words
$x_t=u_t \mathcal G$
and $x_t$ at the instant $t$ depends only on the
information word $u_t$ at the same instant $t$.

The basic idea of \emph{convolutional codification} is to allow $x_t$ to
depend not only on $u_t$ but also on $u_{t-1},\dots, u_{t-m}$ for
some positive integer $m$, which is the \emph{memory} of the code.
If one denotes a sequence
of words as a polynomial vector
$
u(z)=\displaystyle\sum_{t=0}u_t z^t \in \F_q[z]^k
$
and the product by $z^i$  as a \emph{delay
operator},
$
z^i u(z)=\displaystyle\sum_{t=0} u_t z^{t+i} = \displaystyle\sum_{t=i} u_{t-i} z^{t}\,,
$
each $k\times n$ matrix $\mathcal G$ of rank $k$ with entries in $\F_q[z]$ defines an injective morphism 
 of $\F_q[z]$-modules
$$
\begin{aligned}\F_q[z]^k & \xrightarrow{\mathcal G}\F_q[z]^n\\
u(z)&\mapsto x(z)=u(z)\mathcal G\end{aligned}
$$
and one says that the image submodule is a \emph{convolutional code},
$\Cc=\Img \mathcal G\subseteq
\F_q[z]^n$, of length $n$ and dimension $k$,  and that $\mathcal G$ is a generator matrix of $\Cc$.
One can then define \emph {a rate $k/n$ convolutional code $\Cc$ as a submodule of rank $k$ of $\F_q[z]^n$}.

If we allow the possibility of performing \emph{feedback}, then we can reverse the delay and define convolutional
codification  over  the field of fractions, $\F_q(z)$,
of $\F_q[z]$.

\begin{defin}
A rate $k/n$ \emph{convolutional code} $\Cc$ over $\F_q$ is a $\F_q(z)$-linear
subspace of dimension $k$  of  $\F_q(z)^n$. The integers $(n,k)$ are called, respectively, the \emph{length} and
\emph{dimension} of the convolutional code.
\end{defin}

Each  \emph{generator matrix} $\mathcal G$ of $\Cc$ with entries in $\F_q(z)$ defines an 
injective linear \emph{encoding map}:
$$
\begin{aligned} \F_q(z)^k&\xrightarrow{\mathcal G} \F_q(z)^n\\
u(z)&\mapsto x(z)=u(z)\mathcal G\,,\text{ such that $\Img \mathcal G=\Cc$.}
\end{aligned}
$$
Given two generator matrices $\mathcal G$ and $\mathcal G'$ of the convolutional
code $\Cc$, there exists an element 
$B\in GL(k,\F_q (z))$ such that $\mathcal G'=B\cdot \mathcal G$.

If $\mathcal G$ and $\mathcal G'$ are \emph{polynomial generator matrices} of $\Cc$, that is, with entries in $\F_q[z]$, then their image submodules, as morphisms of $\F_q[z]$-modules  $ \F_q[z]^k\xrightarrow{\mathcal G,\mathcal G'} F_q[z]^n$, satisfy
$$
\Img\mathcal G\otimes_{\F_q[z]} \F_q(z)=\Cc= \Img\mathcal G'\otimes_{\F_q[z]} \F_q(z);
$$
although, they may be different: $\Img\mathcal G\neq \Img\mathcal G'$.

Thus, we are interested in polynomial generator matrices that define the same submodule. 
The family of the polynomial generator matrices called \emph{basic} (\cite{For70}, \cite{McE98}), satisfies this property.

\bigskip

\subsection{Basic generator matrices. Degree of a convolutional code}
\
\smallskip

Let $\Cc\subseteq \F_q(z)^n$ be a $(n,k)$ convolutional code.

\begin{defin}
A polynomial generator matrix $\mathcal G$ of $\Cc$, $ \F_q[z]^k\xrightarrow{\mathcal G} F_q[z]^n$,   is basic if any of the following
equivalent conditions are satisfied:
\begin{enumerate}
\item The quotient module $\F_q[z]^n/\Img\mathcal G$ is free.
\item The invariant factors of $\mathcal G$ are all equal to one.
\item The greatest common divisor of the order $k$ minors of $\mathcal G$ is equal to one.
\item $\mathcal G$ has a right inverse in $\F_q[z]$.
\end{enumerate}
\end{defin}

The existence of basic matrices for all convolutional codes was
proved in a constructive way by Forney \cite{For70}, using the Smith
algorithm for the computation of invariant factors (\cite{DMS:08},Theorem 11.16).

\begin{thm}\emph{(\cite{DMS:08})} Let $ \F_q[z]^k\xrightarrow{\mathcal G,\mathcal G'} F_q[z]^n$ be two polynomial generator matrices of $\Cc$. One has:
\begin{enumerate}
\item If $\mathcal G$ is basic and $\Img \mathcal G\subseteq \Img \mathcal G'$, then $\Img\mathcal G =\Img \mathcal G'$.
\item If $\mathcal G$ and $\mathcal G'$ are basic, then $\Img\mathcal G =\Img \mathcal G'$; that is, basic generator matrices define the same submodule.
\end{enumerate}
\end{thm}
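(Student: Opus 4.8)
The plan is to deduce both statements from a single structural fact about basic matrices: if $\mathcal G$ is a basic generator matrix of $\Cc$, then the image of the $\F_q[z]$-module morphism $\F_q[z]^k\xrightarrow{\mathcal G}\F_q[z]^n$ recovers exactly the polynomial part of the code, i.e.
$$
\Img\mathcal G=\Cc\cap\F_q[z]^n,
$$
where the intersection is taken inside $\F_q(z)^n$. Once this is established, both parts fall out immediately, since the right-hand side does not depend on the chosen basic matrix.

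First I would prove this fact. The inclusion $\Img\mathcal G\subseteq\Cc\cap\F_q[z]^n$ is obvious. For the reverse inclusion I would invoke condition (4) in the definition of basic and pick a right inverse $H$ of $\mathcal G$ with entries in $\F_q[z]$, so that $\mathcal G H=I_k$. Given $x\in\Cc\cap\F_q[z]^n$, the fact that $\mathcal G$ is a generator matrix of $\Cc$ over $\F_q(z)$ lets us write $x=v\mathcal G$ with a (unique) $v\in\F_q(z)^k$; then $v=v\mathcal G H=xH$, and since $x\in\F_q[z]^n$ and $H$ has polynomial entries, $v\in\F_q[z]^k$. Hence $x$ lies in the image of the $\F_q[z]$-module morphism $\mathcal G$, as claimed.

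With this in hand, part (1) is immediate: since $\Img\mathcal G'\subseteq\F_q[z]^n$ and $\Img\mathcal G'\subseteq\Cc$, we always have $\Img\mathcal G'\subseteq\Cc\cap\F_q[z]^n=\Img\mathcal G$; combined with the hypothesis $\Img\mathcal G\subseteq\Img\mathcal G'$ this forces $\Img\mathcal G=\Img\mathcal G'$. Part (2) is equally immediate: if both $\mathcal G$ and $\mathcal G'$ are basic, then $\Img\mathcal G=\Cc\cap\F_q[z]^n=\Img\mathcal G'$.

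I would also record an alternative route to part (1) that avoids condition (4): from $\Img\mathcal G\subseteq\Img\mathcal G'$ one has the exact sequence of $\F_q[z]$-modules
$$
0\to\Img\mathcal G'/\Img\mathcal G\to\F_q[z]^n/\Img\mathcal G\to\F_q[z]^n/\Img\mathcal G'\to 0,
$$
and tensoring with $\F_q(z)$ annihilates the left-hand term (both matrices generate the same $\Cc$), so $\Img\mathcal G'/\Img\mathcal G$ is torsion; but it is a submodule of $\F_q[z]^n/\Img\mathcal G$, which is free — hence torsion-free — by condition (1), and a module over the domain $\F_q[z]$ that is both torsion and torsion-free is zero. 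I do not anticipate a real obstacle in either version: the only point that requires a little care is that one is freely allowed to use whichever of the four characterizations of ``basic'' is most convenient, and this is precisely what the definition grants.
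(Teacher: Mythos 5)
Your proof is correct. Note that the paper itself gives no argument for this theorem --- it is quoted from the reference \cite{DMS:08} --- so there is no internal proof to compare against; what you supply is a complete, self-contained justification. Your key lemma, that a basic encoder satisfies $\Img\mathcal G=\Cc\cap\F_q[z]^n$, is exactly the classical characterization of basic (non-catastrophic) encoders going back to Forney, and your derivation of it from the right-inverse criterion (condition (4)) is clean: from $x=v\mathcal G$ and $\mathcal G H=I_k$ you get $v=xH\in\F_q[z]^k$, so polynomial codewords have polynomial information vectors. With that lemma both parts are indeed immediate, and in fact your argument shows that in part (1) the inclusion $\Img\mathcal G'\subseteq\Img\mathcal G$ holds automatically, so the hypothesis $\Img\mathcal G\subseteq\Img\mathcal G'$ is only needed for the reverse containment. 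Your alternative route via the exact sequence and condition (1) (freeness, hence torsion-freeness, of $\F_q[z]^n/\Img\mathcal G$, against the torsion quotient $\Img\mathcal G'/\Img\mathcal G$) is also valid and has the merit of proving part (1) without constructing a right inverse; it is closer in spirit to how such statements are usually handled in the module-theoretic treatment of the cited reference. Either version is acceptable.
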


\begin{thm} \emph{(\cite{DMS:08},\cite{McE98})}The family of the basic generator matrices is invariant under the action of the unimodular group $GL(k,\F_q[z])$. Thus, the number $$\delta_\mathcal G=\text{maximum degree of the order $k$ minors of  $\mathcal G$}$$
is the same for all basic encoders. 
\end{thm}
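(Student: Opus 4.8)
The plan is to show two things: first, that if $\mathcal G$ is basic and $U \in GL(k,\F_q[z])$, then $U\cdot\mathcal G$ is again basic; second, that the quantity $\delta_{\mathcal G}$ (the maximal degree among the order-$k$ minors) is unchanged under such left multiplication, and hence, combined with the fact that any two basic generator matrices differ by a unimodular left factor (Theorem~1.4(2)), that $\delta_{\mathcal G}$ is an invariant of the code. For the first part I would use the characterization of ``basic'' from the preceding definition, most conveniently condition (1) (the cokernel $\F_q[z]^n/\Img\mathcal G$ is free) or condition (4) (existence of a right inverse over $\F_q[z]$). Since $U$ is invertible over $\F_q[z]$, we have $\Img(U\mathcal G) = \Img\mathcal G$ as submodules of $\F_q[z]^n$ — indeed $u\mapsto uU$ is an automorphism of $\F_q[z]^k$ — so the cokernel is literally the same module, which is free; thus $U\mathcal G$ is basic. (Alternatively: if $\mathcal G H = I_k$ with $H$ having entries in $\F_q[z]$, then $(U\mathcal G)(HU^{-1}) = I_k$, and $U^{-1}$ has polynomial entries since $U$ is unimodular.)

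For the invariance of $\delta_{\mathcal G}$, the key tool is the Cauchy–Binet formula. Writing $\mathcal G' = U\mathcal G$ with $U \in GL(k,\F_q[z])$, each order-$k$ minor of $\mathcal G'$ — indexed by a choice of $k$ columns $S$ — equals $\det(U)\cdot m_S(\mathcal G)$, where $m_S(\mathcal G)$ is the corresponding order-$k$ minor of $\mathcal G$, because the product is over a $k\times k$ square factor $U$. Since $U$ is unimodular, $\det U \in \F_q^\times$ is a nonzero constant, so $\deg m_S(\mathcal G') = \deg m_S(\mathcal G)$ for every $S$; taking the maximum over $S$ gives $\delta_{\mathcal G'} = \delta_{\mathcal G}$. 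Combining with Theorem~1.4(2), any two basic generator matrices of $\Cc$ are related by a unimodular left factor, so they all share the same value of $\delta$, which therefore depends only on $\Cc$.

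The one point that deserves a little care — and is the only place the argument is not completely mechanical — is making sure the relevant change-of-basis matrix is genuinely a $k\times k$ unimodular matrix over $\F_q[z]$, so that Cauchy–Binet collapses to multiplication by the single scalar $\det U$. This is exactly the content of the cited theorem on the structure of basic generator matrices: two polynomial generator matrices of $\Cc$ differ by some $B\in GL(k,\F_q(z))$, and when both are basic one can show $B$ and $B^{-1}$ both have entries in $\F_q[z]$ (this follows from each matrix having a polynomial right inverse, or from comparing invariant factors), i.e. $B$ is unimodular. Once that is in hand, the degree computation is immediate. I would present the proof in the order: (i) $\Img(U\mathcal G)=\Img\mathcal G$ hence $U\mathcal G$ basic; (ii) Cauchy–Binet gives $m_S(U\mathcal G)=\det(U)\,m_S(\mathcal G)$; (iii) $\det U$ constant $\Rightarrow$ degrees preserved $\Rightarrow$ $\delta$ preserved; (iv) invoke Theorem~1.4(2) to conclude $\delta_{\mathcal G}$ is well defined on $\Cc$.
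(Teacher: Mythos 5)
Your argument is correct and complete: unimodular left multiplication preserves basicness (same image submodule, hence same free cokernel), each order-$k$ minor gets multiplied by the unit $\det U\in\F_q^\times$ so its degree is unchanged, and the reduction of the second claim to the first via the fact that two basic encoders differ by a matrix that is itself unimodular (extracted from the polynomial right inverses) is exactly the point that needs checking, which you handle properly. The paper itself gives no proof of this statement --- it is cited to \cite{DMS:08} and \cite{McE98} --- and your proof is the standard one found in those sources, so there is nothing to fault.
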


One can now consider an invariant associated with the code, namely, the degree of the code, which is defined as follows:

\begin{defin}\label{def:deg2} The degree $\delta$ of a convolutional
code $\Cc$ is
$$\delta=\delta_\mathcal G\,,\text{ where $\mathcal G$ is any basic encoder of $\Cc$.}$$
\end{defin}

\medskip

\subsection{Minimal basic generator matrices. Canonical matrices}
\
\medskip

In the implementation of convolutional codes as physical devices it
is convenient to find \emph{minimal encoders}, in the sense that the
corresponding circuit will have the minimum possible quantity of memory boxes. The
formalization of the concept of minimality can be expressed in terms
of the degree $\delta$ of the code.

If $\mathcal G$ is a polynomial generator matrix of $\Cc$, one denotes by $\deg \mathcal G$ the sum of its row degrees.

\begin{thm}  \cite{For70} For each $(n,k)$ convolutional code of degree $\delta$ there exists at least one basic generator matrix $\mathcal G$ such that
$$\delta=\deg \mathcal G\,.$$
Moreover, 
$$\deg \mathcal G\leq \deg \mathcal G'$$
for all polynomial encoders $\mathcal G'$ of the convolutional code.
\end{thm}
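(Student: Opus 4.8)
The plan is to separate the statement into an existence assertion --- there is a basic encoder $\mathcal G$ with $\deg\mathcal G=\delta$ --- and a minimality assertion --- $\deg\mathcal G\le\deg\mathcal G'$ for every polynomial encoder $\mathcal G'$. For the existence part I would start from an arbitrary basic generator matrix $\mathcal G_0$ of $\Cc$ (which exists by Forney's constructive proof \cite{For70}) and reduce its row degrees by left multiplication by matrices of $GL(k,\F_q[z])$; by the invariance of the family of basic encoders under $GL(k,\F_q[z])$, such operations keep the matrix basic, hence keep it a generator matrix of $\Cc$. Writing $d_1,\dots,d_k$ for the row degrees of the current matrix $\mathcal G$, form its leading row coefficient matrix $[\mathcal G]_\infty$, the $k\times n$ matrix over $\F_q$ whose $i$-th row is the vector of coefficients of $z^{d_i}$ in the $i$-th row of $\mathcal G$. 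If $\rk[\mathcal G]_\infty=k$, call $\mathcal G$ row-reduced; otherwise choose $\lambda\in\F_q^k\setminus\{0\}$ with $\lambda[\mathcal G]_\infty=0$ and an index $j$ realizing $\max\{d_i:\lambda_i\ne0\}$, and replace the $j$-th row of $\mathcal G$ by $\sum_i\lambda_i z^{d_j-d_i}(\text{$i$-th row})$. This is effected by a matrix of $GL(k,\F_q[z])$ (its determinant is the unit $\lambda_j$), and it strictly lowers $d_j$, since the coefficient of $z^{d_j}$ in the new $j$-th row is precisely $\lambda[\mathcal G]_\infty=0$. Thus $\deg\mathcal G$ strictly decreases at each step; being a nonnegative integer it cannot decrease forever, so after finitely many steps we arrive at a row-reduced basic encoder $\mathcal G$.

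It then remains to verify that a row-reduced basic encoder $\mathcal G$ satisfies $\deg\mathcal G=\delta$. For \emph{any} polynomial matrix, each order-$k$ minor is a signed sum of products of $k$ entries, one from each row, so its degree is at most $d_1+\dots+d_k=\deg\mathcal G$; hence the maximum degree of the order-$k$ minors of $\mathcal G$ is $\le\deg\mathcal G$. Conversely, since $\rk[\mathcal G]_\infty=k$ there is a set of $k$ columns on which $[\mathcal G]_\infty$ is invertible, and the corresponding order-$k$ minor of $\mathcal G$ has leading term a nonzero multiple of $z^{d_1+\dots+d_k}$ --- the scalar being the determinant of that invertible submatrix --- so its degree is exactly $\deg\mathcal G$. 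Therefore the maximum degree of the order-$k$ minors of $\mathcal G$ equals $\deg\mathcal G$; as $\mathcal G$ is basic, this quantity is the degree $\delta$ of $\Cc$ by Definition \ref{def:deg2}, which gives $\deg\mathcal G=\delta$.

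For the minimality inequality, let $\mathcal G'$ be an arbitrary polynomial encoder of $\Cc$ and let $\mathcal G_b$ be a basic encoder, with polynomial right inverse $\mathcal G_b^{-1}$. Since $\mathcal G'$ and $\mathcal G_b$ are generator matrices of the same code there is $T\in GL(k,\F_q(z))$ with $\mathcal G'=T\mathcal G_b$, and then $T=\mathcal G'\mathcal G_b^{-1}$ has polynomial entries. As $T$ is square, the order-$k$ minor of $\mathcal G'$ on any choice of $k$ columns equals $\det T$ times the corresponding minor of $\mathcal G_b$, so the maximum degree of the order-$k$ minors of $\mathcal G'$ is $\deg\det T+\delta\ge\delta$. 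Combining this with the inequality of the previous paragraph applied to $\mathcal G'$ yields $\deg\mathcal G'\ge\delta=\deg\mathcal G$, as required. The one delicate point of the whole argument is the row-reduction step: one must be sure that a degree-lowering row operation is genuinely available whenever $[\mathcal G]_\infty$ is singular and that it is realized by a unimodular matrix, so that basicness --- and with it the property of generating $\Cc$ --- is preserved along the reduction. The minimality half is then a short consequence of the multiplicativity of determinants and of the invariance of the degree of basic encoders already established.
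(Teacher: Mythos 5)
Your proof is correct: the paper itself gives no argument for this theorem, citing it to Forney, and what you have written is precisely the classical row-reduction proof from that source (reduce a basic encoder by unimodular operations until its leading row coefficient matrix has full rank, identify the sum of row degrees of such a matrix with the maximal degree of its order-$k$ minors, and compare an arbitrary polynomial encoder $\mathcal G'$ with a basic one via $\mathcal G'=T\mathcal G_b$ with $T$ polynomial). The only cosmetic points are that the sum defining the new $j$-th row should be restricted to indices with $\lambda_i\neq 0$ so that no negative powers of $z$ appear, and that the new row cannot vanish because the rows of a rank-$k$ matrix are $\F_q(z)$-independent; neither affects the argument.
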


These basic generator matrices were called \emph{minimal basic matrices} by Forney \cite{For70} or \emph{canonical matrices} by McEliece  \cite{McE98}.

\subsection{Dual code. Control matrix}
\
\medskip

\label{def:dual} Given an $(n,k)$-convolutional code $\Cc\subseteq \F_q(z)^n$, the
dual code is the $\F_q(z)$-subspace defined by
$$
\Cc^\perp=\{ y(z)\in \F_q(z)^n \ /\  \langle x(z),y(z)\rangle=0
\text{\ for every $x(z)\in\Cc$}\}\,,
$$
with respect to the pairing
$\langle x(z) , y(z) \rangle = \sum_{i=1}^n x_i(z) y_i(z)$,
where $x(z)=(x_1(z),\dots,\break x_n(z))$ and $y(z)=(y_1(z),\dots,y_n(z))$ are in 
$\F_q(z)^n$.

\begin{thm}(\cite{DMS:08},Theorem 11.28 ) 
$\Cc^\perp$ is an $(k,n-k)$ \emph{convolutional code with the same degree as} $\Cc$.
\end{thm}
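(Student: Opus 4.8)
The plan is to construct an explicit \emph{control matrix} $H$ of $\Cc^{\perp}$ from a basic encoder of $\Cc$ and then to compare maximal minors. The claims on length and dimension are pure linear algebra over $\F_q(z)$: the pairing $\langle\,,\,\rangle$ on $\F_q(z)^n$ is nondegenerate, so $\dim_{\F_q(z)}\Cc+\dim_{\F_q(z)}\Cc^{\perp}=n$; hence $\Cc^{\perp}$ has length $n$ and dimension $n-k$, and it is again a convolutional code in the sense of the definition.

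For the encoder, I would fix a basic generator matrix $\mathcal G$ of $\Cc$ (a $k\times n$ matrix over $\F_q[z]$). By condition~(1) in the definition of \emph{basic}, $\F_q[z]^n/\Img\mathcal G$ is free, so the rows of $\mathcal G$ extend to an $\F_q[z]$-basis of $\F_q[z]^n$ and can be completed to a matrix $M=\binom{\mathcal G}{K}\in GL(n,\F_q[z])$. Writing $M^{-1}=(P\mid Q)$ with $Q$ of size $n\times(n-k)$ and setting $H:=Q^{t}$, the relation $MM^{-1}=I_n$ forces $\mathcal G\,H^{t}=\mathcal G\,Q=0$, so the rows of $H$ lie in $\Cc^{\perp}$; since the columns of $Q$ form part of a basis of $\F_q[z]^n$, the cokernel of $H$ is free, so $H$ is basic; and because $\Img H$ has rank $n-k=\dim\Cc^{\perp}$, tensoring with $\F_q(z)$ gives $\Img H\otimes_{\F_q[z]}\F_q(z)=\Cc^{\perp}$. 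Thus $H$ is a basic generator matrix of $\Cc^{\perp}$, and by Definition~\ref{def:deg2}, $\delta_{\Cc^{\perp}}=\delta_H$, the maximum degree of the order $(n-k)$ minors of $H$.

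It remains to show $\delta_H=\delta_{\mathcal G}$. For a $k$-subset $I\subseteq\{1,\dots,n\}$, let $\mathcal G_I$ be the $k\times k$ submatrix of $\mathcal G$ on the columns $I$ and $H_{I^{c}}$ the $(n-k)\times(n-k)$ submatrix of $H$ on the complementary columns. Since $H_{I^{c}}$ is, up to transposition, the submatrix of $M^{-1}$ on rows $I^{c}$ and columns $k+1,\dots,n$, Jacobi's identity for complementary minors of $M$ and $M^{-1}$ yields
$$\det\mathcal G_I=\varepsilon_I\,(\det M)\,\det H_{I^{c}},\qquad\varepsilon_I=\pm1,$$
with $\det M\in\F_q^{\times}$ the same for every $I$. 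Since $\det M$ has degree $0$, we get $\deg\det\mathcal G_I=\deg\det H_{I^{c}}$ for all $I$, and taking maxima over all $I$ gives $\delta_{\mathcal G}=\max_I\deg\det\mathcal G_I=\max_I\deg\det H_{I^{c}}=\delta_H$, hence $\delta_{\Cc^{\perp}}=\delta_{\Cc}$. I expect the complementary-minor step to be the main obstacle: one has to carry out the sign bookkeeping in Jacobi's identity and, above all, verify that a \emph{single} scalar ($\det M$) serves for all index sets $I$ simultaneously. The fact that $\det M$ is a unit — consistent with the gcd of the maximal minors of a basic matrix being $1$ (condition~(3)) — is precisely what makes the two degrees coincide, rather than differ by a shift.
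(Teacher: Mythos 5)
The paper states this theorem without proof, citing \cite{DMS:08} (Theorem 11.28), so there is no internal argument to compare against. Your proof is correct and is the standard one found in the cited literature (\cite{For70}, \cite{McE98}): the dimension count follows from nondegeneracy of the pairing, completing a basic encoder $\mathcal G$ to a unimodular $M\in GL(n,\F_q[z])$ produces a basic encoder $H=Q^t$ of $\Cc^\perp$, and Jacobi's complementary-minor identity $\det\mathcal G_I=\pm(\det M)\det H_{I^c}$ with $\det M\in\F_q^\times$ a single unit independent of $I$ gives $\delta_{\mathcal G}=\delta_H$; the points you flag as delicate (the sign bookkeeping and the fact that one scalar serves for all $I$) are indeed the only things to verify, and they check out.
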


One defines a \emph{control matrix} for $\Cc$ as a $n-k\times n$ generator
matrix $H$ of its dual code $\Cc^\perp$.

\bigskip

\subsection{Weights and Free Distance}
\
\medskip

The \emph{(Hamming)
weight} of a vector $x=(x_1,\dots,x_n)\in\F_q^n$ is given by $w(x)=\#\{i\ |\ x_i\ne 0\}$ and the
\emph{(Hamming) distance} between $x,y\in\F_q^n$ is defined as
the weight $w(y-x)$.

In the case of convolutional codes, one needs an analogous notion for
polynomial vectors 
 $x(z)=(x_1(z),\dots,x_n(z))\in\F_q[z]^n\,.$
 If one writes $x(z)\in\F_q[z]^n$ as a polynomial with vector coefficients,
$$
x(z)=\sum_t x_tz^{t}\,,\text{ where
$x_t=(x_{t1},\dots,x_{tn})\in\F_q^n$\,,}
$$
then one can define a natural notion of \emph{weight in
convolutional coding theory} as folows:

\begin{defin} The \emph{weight} of $x(z)\in\F_q[z]^n$ is 
$$
w(x(z))=\sum_t w(x_t)\,.
$$
\end{defin}

\begin{defin} The free distance of an $(n,k)$ convolutional code $\Cc\subseteq \F_q(z)^n$ is
$$
\df=\min \{w(x(z))\ |\ x(z)\in\Cc \cap\F_q[z]^n\,,\ x(z)\neq 0\}\,.
$$
\end{defin}

In particular, if the degree of the code is zero, $\Cc$ is a
linear code and the (free) distance is the (minimum) distance as
linear code.

One says  that a linear code $\Cc(n,k)$ is MDS if its Hamming distance attains the Singleton bound $n-k+1$. Analogously one has:

\begin{defin} A convolutional code is MDS if its free distance $\df$ attains the generalized  \emph{Singleton bound} \cite{RoSm99}; that is,
$$\df=(n-k)(\lfloor\delta/k\rfloor+1)+\delta+1\,,$$
where $n$, $k$ and $\delta$ are respectively the length, dimension and degree of the  convolutional code, $C(n,k,\delta)$.
\end{defin}
 
\bigskip

In order to compute the free distance of a one-dimensional convolutional code $C(n,1,\delta)$ in terms of the polynomial decomposition of a canonical generator matrix
$$\mathcal G=G_0+G_1z+G_2z^2+\cdots+G_{\delta}z^{\delta}\,,$$
it  is usefull  to know  the Hamming weights of the linear codes $G_j$, and  of the  linear codes $\begin{pmatrix}G_j\\\vdots\\ G_0\end{pmatrix}$ and $\begin{pmatrix}G_{\delta}\\\vdots\\ G_{\delta-j}\end{pmatrix}$, for all $0\leq j\leq \delta$.


\begin{thm}\label{linearG}
If the  codes $G_j$ are MDS for all $0\leq j\leq\delta$ and the codes $\begin{pmatrix}G_j\\\vdots\\ G_0\end{pmatrix}$ and $\begin{pmatrix}G_{\delta}\\\vdots\\ G_{\delta-j}\end{pmatrix}$ are $(n,j+1)$ linear codes and MDS for all $0\leq j\leq \delta$, then $\delta< n$, $C(n,1,\delta)$ is a MDS convolutional code, and $\mathcal G=G_0+G_1z+G_2z^2+\cdots+G_{\delta}z^{\delta}$ is a generator matrix.

\end{thm}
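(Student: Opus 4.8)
\medskip
\noindent\textbf{Strategy of proof.}\ \ The plan is to prove the three conclusions in turn, the substantial one being $\df\ge n(\delta+1)$. Since $\begin{pmatrix}G_\delta\\\vdots\\G_0\end{pmatrix}$ is an $(n,\delta+1)$ code, $\delta+1\le n$, so $\delta<n$; being MDS its $\delta+1$ rows are independent, so $(1,\alpha,\dots,\alpha^\delta)\begin{pmatrix}G_0\\\vdots\\G_\delta\end{pmatrix}=\mathcal G(\alpha)\ne0$ for every $\alpha\in\overline{\F_q}$, whence $\gcd(\mathcal G_1(z),\dots,\mathcal G_n(z))=1$ and $\mathcal G$ is basic; a one--row basic matrix is automatically minimal basic, so $\mathcal G$ is canonical. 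As each $G_j$ is $(n,1)$ MDS it has no zero coordinate, so every $\mathcal G_i(z)$ has degree $\delta$; hence $\deg\mathcal G=\delta$ is the degree of $\Cc=\Img\mathcal G$, and $x(z)=\mathcal G(z)$ (the case $u(z)=1$) has weight $\sum_{j=0}^{\delta}w(G_j)=n(\delta+1)$, i.e.\ $\df\le n(\delta+1)$, which is exactly the generalized Singleton bound for $C(n,1,\delta)$.

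For the reverse inequality, let $0\ne x(z)=u(z)\mathcal G(z)\in\Cc\cap\F_q[z]^n$ with $u(z)\in\F_q[z]$ (possible because $\mathcal G$ is basic). If $u$ has $\delta+1$ consecutive zero coefficients, then $x$ splits into two polynomials with disjoint supports, so $w(x)$ is the sum of their weights; iterating and cancelling powers of $z$, one may assume $u(0)\ne0$ and that $u$ has no run of $\delta+1$ zero coefficients. Put $r=\deg u$ and $x_t=\sum_{i+j=t}u_iG_j$; then $x_t$ is a combination of the rows of the contiguous block $B_t=\begin{pmatrix}G_{\min(t,\delta)}\\ \vdots\\ G_{\max(0,t-r)}\end{pmatrix}$, having $\rho_t:=\min(t,\delta)-\max(0,t-r)+1$ rows. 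The coefficient window $(u_t,\dots,u_{t-\delta})$ is nonzero (it contains $u_0$ if $t\le\delta$, contains $u_r$ if $t\ge r$, and is otherwise a full block of $\delta+1$ coefficients, not all zero by our reduction) and the rows of $B_t$ are independent, so $x_t\ne0$; consequently, whenever $B_t$ is MDS, $w(x_t)\ge n-\rho_t+1$.

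Now $B_t$ is the initial segment $\begin{pmatrix}G_t\\\vdots\\G_0\end{pmatrix}$ for $t\le r$, the final segment $\begin{pmatrix}G_\delta\\\vdots\\G_{t-r}\end{pmatrix}$ for $t\ge\delta$, and $\begin{pmatrix}G_\delta\\\vdots\\G_0\end{pmatrix}$ for $\delta\le t\le r$, all of which are MDS by hypothesis, so $w(x_t)\ge n-\rho_t+1$ in each case. If $r\ge\delta-1$, every $t\in\{0,\dots,r+\delta\}$ lies in one of these ranges, and a short summation of the estimates gives
\[
w(x)\ \ge\ \sum_{t=0}^{r+\delta}(n-\rho_t+1)\ =\ n(\delta+1)+r(n-\delta)\ \ge\ n(\delta+1)
\]
using $n>\delta$, with equality only when $r=0$; so the theorem holds for every codeword with $\deg u\ge\delta-1$.

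The remaining case, $r=\deg u\le\delta-2$, is the heart of the argument: for $r+1\le t\le\delta-1$ the block $B_t=\begin{pmatrix}G_t\\\vdots\\G_{t-r}\end{pmatrix}$ is a sliding window of $r+1$ consecutive rows, which is \emph{not} one of the matrices named in the hypotheses, so the bound $w(x_t)\ge n-r$ does not follow directly. The plan is to show that every such window is still MDS; then the identical summation gives $w(x)\ge n(\delta+1)+r(n-\delta)\ge n(\delta+1)$ and the proof is complete. The device I would use is the invertible lower--triangular Toeplitz identity $\begin{pmatrix}x_0\\ \vdots\\ x_\delta\end{pmatrix}=U\begin{pmatrix}G_0\\ \vdots\\ G_\delta\end{pmatrix}$, from which any block of consecutive $x_t$'s spans the same linear code as the corresponding block of $G_j$'s; thus the window--MDS property is intrinsic to the $G_j$'s together with the two given families of MDS codes $\begin{pmatrix}G_j\\\vdots\\G_0\end{pmatrix}$ and $\begin{pmatrix}G_\delta\\\vdots\\G_{\delta-j}\end{pmatrix}$, and deriving MDS-ness of the intermediate windows from these is the delicate step on which the theorem rests.
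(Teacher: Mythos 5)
Your argument follows the same route as the paper's own proof: decompose $x(z)=u(z)\mathcal G$ into its vector coefficients $x_t$, bound $w(x_t)$ below by the minimum distance of the contiguous window code $B_t$ containing it, and sum (your closed form $n(\delta+1)+r(n-\delta)$ is exactly the paper's $I_j=(j+1)n+(n-j)\delta$). Your handling of the points the paper leaves implicit --- that $\delta<n$, that $\mathcal G$ is basic and canonical so that the code really has degree $\delta$ and $\mathcal G$ is a generator matrix, the reduction to $u_0\neq 0$ with no run of $\delta+1$ zero coefficients, and the nonvanishing of each $x_t$ --- is correct and more careful than the paper's. But the proof is not complete: for $1\le \deg u\le \delta-2$ you only announce a \emph{plan}, namely to show that each intermediate window $\begin{pmatrix}G_t\\ \vdots\\ G_{t-r}\end{pmatrix}$, $r+1\le t\le \delta-1$, is MDS, and that plan cannot succeed from the stated hypotheses. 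MDS-ness does not descend from a code to the subcode spanned by a subset of its rows, and one can choose $G_0,\dots,G_\delta$ satisfying every hypothesis of the theorem while some intermediate window fails to be MDS: take $n=4$, $\delta=3$, $G_0=(1,1,1,1)$, $G_1=(1,2,3,4)$, $G_2=(1,2,c_3,c_4)$ with $c_3,c_4$ generic, $G_3$ generic; then $G_2-G_1=(0,0,c_3-3,c_4-4)$ has weight $2$, so $\begin{pmatrix}G_2\\ G_1\end{pmatrix}$ is a non-MDS $(4,2)$ code, yet all the initial segments, final segments and single rows can be made MDS. So the intermediate-window property is genuinely additional information, and the central case of your argument remains unproved. (A minor further slip: for $\delta\ge 2$ your case split sends $\deg u=0$ into the ``hard'' range, although there the windows are the single rows $G_t$ and the hypothesis does apply.)

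You have, however, put your finger on a real weakness of the paper's own proof. The paper asserts that all coefficients of $p_j(z)\mathcal G$ lie in ``the linear codes considered in the statement'' and assigns the middle $\delta-j+1$ coefficients the weight bound $n-j$; for $1\le j\le\delta-2$ this is precisely the intermediate-window bound that the stated hypotheses do not furnish. In the applications (Theorems \ref{MDS} and \ref{MDS1}) every contiguous window $\begin{pmatrix}G_i\\ \vdots\\ G_{i-j}\end{pmatrix}$ is verified to be MDS, so the hypothesis actually being used is ``all contiguous windows are MDS''; under that strengthened hypothesis your summation already yields $w(x)\ge n(\delta+1)+r(n-\delta)\ge n(\delta+1)$ for every $r$ and the proof closes. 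As written, though, neither your proposal nor the paper's proof derives the theorem from the hypotheses as stated (note this does not by itself show the statement is false --- in the example above the free distance still appears to meet the bound --- only that a finer argument, or a stronger hypothesis, is needed).
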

\begin{proof}
Note first that since $\begin{pmatrix}G_\delta\\\vdots\\ G_0\end{pmatrix}$ has rank $\delta+1$ and $n$ columns, one has $\delta<n$.

The polynomial codewords of the code $C(n,1,\delta)$ have the form $p_j(z)G$, where $p_j(z)\in\F_q[z]$ is a polynomial of degree  $j$. Thus,  $p_j(z)=a_0+a_1z+a_2z^2+\cdots+a_jz^j$, with $a_j\neq0$. Moreover, since the  weight does not change as we multiply by  $z$, we can assume that $a_0\neq0$. 

The polynomial coefficients of the codeword $p_j(z)\mathcal G$ are codewords of the linear codes considered in the statement; thus, a lower bound $I_j$ for the weight $w(p _j(z)\mathcal G)$ is given by the sum of their minimal distances. One has:
$$
\begin{aligned}
I_0&=n(\delta+1)\\
I_j&=2n+2(n-1)+2(n-2)+\cdots+2(n-j+1)+(n-j)(\delta-j+1)\,,\\
&\text{ \hskip 12pt for $j\leq \delta$}\\
I_{\delta+i}&=I_\delta+i(n-\delta)\,,\text{ for $i\geq0\,,$}
\end{aligned}
$$
which leads to
$$
I_j=(j+1)n+(n-j)\delta\,,\text{ for every $j\geq0$}\,.
$$
Since $I_{j+1}-I_j=n-\delta>0$, for $j\geq0$, the free distance of the code is
$$
\df(C(n,1,\delta))=I_0=n(\delta+1)\,.
$$
Thus, $C(n,1,\delta)$ is a MDS convolutional code.
\end{proof}

Convolutional Goppa codes provide examples of this situation, as we shall see later in this paper.

\section{Summary of Convolutional Goppa Codes Theory}\label{sec:goppa}

Let $(X,\Oc_X)$ be a smooth
projective curve over $\F_q(z)$ of genus $g$, and let  $\Sigma_X$ be the field of rational functions of $X$; we also assume that $\F_q(z)$ is algebraically closed in $\Sigma_X$. 

Given a set $p_1,\dots, p_n$ of $n$ different $\F_q(z)$-rational
points of $X$,  if $\Oc_{p_i}$ denotes the local ring at the point
$p_i$, with maximal ideal $\m_{p_i}$, and $t_i$ is a local parameter
at $p_i$, one has the exact sequences
\begin{equation}\label{point}
\begin{aligned}
0\to \m_{p_i} \to \Oc_{p_i}&\to \Oc_{p_i}/\m_{p_i}\simeq
\F_q(z)\to 0\cr s(t_i)&\mapsto s(p_i)\,.
\end{aligned}
\end{equation}
Let us consider the divisor $D=p_1+\dots+p_n$, with its associated
invertible sheaf $\Oc_X(D)$. One has an exact sequence of
sheaves
\begin{equation}\label{divisor}
0\to \Oc_X(-D)\to \Oc_X\to Q\to 0\,,
\end{equation}
where the quotient $Q$ is a sheaf with support at the points
$p_i$.

Let $G$ be a divisor on $X$ of degree $r$, with support disjoint
from $D$. Tensoring the exact sequence (\ref{divisor}) by the
associated invertible sheaf $\Oc_X(G)$, one obtains
\begin{equation}\label{sheaf}
0\to \Oc_X(G-D)\to \Oc_X(G)\to Q\to 0\,.
\end{equation}

For each divisor $F$ over $X$, let us denote their
$\F_q(z)$-vector space of global sections by
$$L(F)\equiv
\Gamma(X,\Oc_X(F))=\{s\in \Sigma_X\ |\ (s)+F\geq 0\}\,,$$ where
$(s)$ is the divisor defined by $s\in \Sigma_X$. Taking global
sections in (\ref{sheaf}), one obtains
$$
\begin{aligned}
0\to L(G-D) \to L(G)& \overset\alpha\to
\F_q(z)\times\overset{\overset{n}\smile}\dots\times \F_q(z)\to \dots\cr s&\mapsto
(s(p_1),\dots, s(p_n))\,.
\end{aligned}
$$

\begin{defin} \emph{(\cite{DMPS06},\cite{DMPS04})}The convolutional Goppa code $\Cc(D,G)$ associated
with the pair $(D,G)$ is the image of the $\F_q(z)$-linear map

\centerline{$\alpha\colon L(G)\to \F_q(z)^n$}

Analogously, given a subspace $\Gamma\subseteq L(G)$, one defines
the convolutional Goppa code $\Cc(D,\Gamma)$ as the image of
${\alpha}_{\vert_{\Gamma}}$.
\end{defin}

By construction, $\Cc(D,G)$ is a convolutional code of length $n$
and dimension
$$k\equiv\dim L(G)-\dim L(G-D)\,.$$

Under the condition $2g-2<r<n$, the
evaluation map $\alpha\colon L(G)\hookrightarrow \F_q(z)^n$ is
injective, and the dimension of $\Cc(D,G)$ is
$$k=r+1-g\,.$$

\emph{The dual convolutional Goppa code} of the code $\Cc(D,G)$ is the
$\F_q(z)$-linear subspace $\Cc^\perp(D,G)$ of $\F_q(z)^n$ given as  in  \ref{def:dual}.

As we proved in \cite[\S3]{DMPS06},
the dual convolutional Goppa code $\Cc^\perp(D,G)$ associated
with the pair $(D,G)$ is the image of the $\F_q(z)$-linear map
$\beta\colon L(K+D-G)\to \F_q(z)^n$, given by
$$\beta(\eta)=(\Res_{p_1}(\eta),\dots, \Res_{p_n}(\eta))\,,$$
where $K$ is  the canonical divisor of rational
differential forms over $X$.

Since we are taking $2g-2<r<n$, the map $\beta$ is injective, and
$\Cc^\perp(D,G)$ is a convolutional code of length $n$ and
dimension
$$\dim L(K+D-G)=n-(1-g+r)\,.$$

\subsection{Convolutional Goppa Codes over the projective line}
\
\smallskip

Let $X=\Pp^1_{\F_q(z)}=\Proj \F_q(z)[x_0,x_1]$ be the projective
line over the field $\F_q(z)$, and let us denote by $t=x_1/x_0$
the affine coordinate, by $p_0=(1,0)$  the origin point, and by $p_\infty=(0,1)$ the point at
infinity. 

Let us take  $p_1,\dots,p_n$ different rational points  of
$\Pp^1$ and the divisors
$$\begin{aligned}
&D=p_1+\dots+p_n\\
&G=r p_\infty - s p_0 \text{ , with
$0\leq s \leq r<n$}
\end{aligned}
$$
Since $g=0$, the evaluation map $\alpha\colon L(G)\to \F_q(z)^n$
is injective, and $\Img \alpha$ defines a convolutional Goppa code
$\Cc(D,G)$ of length $n$ and dimension $k=r-s+1$.

If $\alpha_i\in\F_q(z)$, $1\leq i\leq n$, is the local coordinate of the rational point 
$p_i\in\Pp^1_{\F_q(z)}$, so that
$$
\alpha_i=a_i z+b_i\text{,  with $a_i\neq0\,,\,b_i\in\F_q$,}
$$
then, the matrix of the evaluation map
$\alpha$ with respect to the basis $\{t^s,t^{s+1},\dots,t^r\}$ of
$L(G)$ is the following generator matrix for the code $\Cc(D,G)$:
\begin{equation}\label{matrixg}
\mathcal G=\left( \begin{matrix} \alpha_1^s & \alpha_2^s & \dots &
\alpha_n^s\cr \alpha_1^{s+1} & \alpha_2^{s+1} &\dots &
\alpha_n^{s+1} \cr \vdots & \vdots &\ddots & \vdots\cr \alpha_1^r
& \alpha_2^r & \dots & \alpha_n^r\cr
\end{matrix}\right)\,.
\end{equation}

We proved  in \cite[\S5]{DMPS06}  that the parity-check matrix  $H$,  with respect to the basis 
 $$
 \left\langle {\frac{dt}{t^s\displaystyle\prod_{i=1}^n(t-\alpha_i) }},
{\frac{t\ dt}{t^s\displaystyle\prod_{i=1}^n(t-\alpha_i) }}, \dots,
{\frac{t^{n-r+s-2} dt}{t^s\displaystyle\prod_{i=1}^n(t-\alpha_i)
}}\right\rangle\text{ of $L(K+D-G)$, }
$$
is:
\begin{equation}\label{matrixc}
H=\left( \begin{matrix} h_1  & h_2 &\dots & h_n\cr h_1\alpha_1 &
h_2\alpha_2 &\dots & h_n\alpha_n\cr \vdots & \vdots &\ddots &
\vdots\cr h_1\alpha_1^{n-r+s-2} & h_2\alpha_2^{n-r+s-2} & \dots &
h_n\alpha_n^{n-r+s-2}\cr
\end{matrix}\right)\,,
\end{equation}

where $h_j=\dfrac{1}{\alpha_j^s\displaystyle\prod_{\overset{i=1}{i\neq
j}}^n(\alpha_j-\alpha_i) }$.

\section{One-dimensional Convolutional Goppa Codes over $\Pp^1_{\F_q(z)}$ \label{sec:convgoppa}}

We shall construct  two families of examples of  one dimensional convolutional codes over $\Pp^1_{\F_q(z)}$ with canonical generator matrices \cite{McE98}, whose free distance, $\df$, attains the generalized Singleton bound, i.e., they are MDS convolutional codes \cite{RoSm99} (Theorems \ref{MDS} and \ref{MDS1}).

Let $L(G)=\langle t^s,t^{s+1},\dots,t^r\rangle\xrightarrow{\alpha} \F_q(z)^n$ be the evaluation map associated with the divisors  $D=p_1+\dots+p_n$ and  $G=r p_\infty - s p_0$, $0\leq s \leq r<n$, as in Section \ref{sec:goppa}.

Let us denote by $CGC(n,1)$ the one-dimensional convolutional Goppa code defined by restriction of the evaluation map $\alpha$ to the subspace 
$$\Gamma=\langle t^s+t^{s+1}+\dots+t^r\rangle\subseteq L(G)\,.$$ 

If  $\,a_i z+b_i\text{,  with $a_i\neq0\,,\,b_i\in\F_q$}$, is the local coordinate of the rational point 
$p_i\in\Pp^1_{\F_q(z)}$, a generator matrix for the code $CGC(n,1)$ is
$$
\mathcal G=\begin{pmatrix}
\displaystyle\sum_{i=s}^r (a_1z+b_1)^i & \displaystyle\sum_{i=s}^r (a_2z+b_2)^i & \ldots & \displaystyle\sum_{i=s}^r (a_n z+b_n)^i
\end{pmatrix}
$$

We shall describe particular cases of these codes and we shall prove that some of these codes are MDS convolutional codes.

\begin{thm}\label{MDS}
 If $s=r$, a generator matrix  of the code  $CGC(n,1)$ is:
$$
\begin{pmatrix}(a_1z+b_1)^r&(a_2z+b_2)^r&\dots&(a_nz+b_n)^r\end{pmatrix}\,.
$$
If we choose $a_i, b_i$  for each  $1\leq i\leq n$,  such that $\dfrac{b_i}{a_i}=c^{i-1}$,  where $c$ is a primitive element of $\F_q$, a generator matrix  of the code  $CGC(n,1)$ is:
$$
\mathcal G=\begin{pmatrix}a_1^r(z+1)^r&a_2^r(z+c)^r&a_3^r(z+c^2)^r&\dots&a_n^r(z+c^{n-1})^r\end{pmatrix}\,.
$$
The matrix $\mathcal G$ is canonical and the code defined by $\mathcal G$ is MDS.
\end{thm}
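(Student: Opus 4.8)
The plan is to verify the hypotheses of Theorem \ref{linearG} for the generator matrix $\mathcal G$, since $CGC(n,1)$ is one-dimensional; once those hypotheses hold, Theorem \ref{linearG} delivers at once that $\mathcal G$ is a (basic, minimal, i.e.\ canonical) generator matrix and that the code is MDS. Concretely, when $s=r$ the single generator of $\Gamma$ collapses to $t^r$, so the generator matrix is the row $\bigl((a_1z+b_1)^r,\dots,(a_nz+b_n)^r\bigr)$; substituting $b_i=a_ic^{i-1}$ and factoring $a_i^r$ out of the $i$-th entry (which changes neither the submodule generated nor the weights, since the $a_i$ are nonzero scalars) gives $\mathcal G=\bigl(a_1^r(z+1)^r,\dots,a_n^r(z+c^{n-1})^r\bigr)$. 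Expanding $(z+c^{i-1})^r=\sum_{j=0}^r\binom{r}{j}c^{(i-1)(r-j)}z^j$ shows that the coefficient matrices are
$$
G_j=\Bigl(a_1^r\tbinom{r}{j},\ a_2^r\tbinom{r}{j}c^{(r-j)},\ \dots,\ a_n^r\tbinom{r}{j}c^{(n-1)(r-j)}\Bigr),\qquad 0\le j\le r=\delta .
$$

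First I would check that each $G_j$ is MDS as a $(n,1)$ linear code: a one-row code is MDS exactly when it has no zero entry, and here the entry in position $i$ is $a_i^r\binom{r}{j}c^{(i-1)(r-j)}$, which is nonzero because $a_i\ne 0$, because $c$ is a primitive element (hence a unit), and because $\binom{r}{j}\not\equiv 0$ — this last point needs the characteristic hypothesis implicit in the construction and will be addressed below. Next I would examine the stacked matrices $\begin{pmatrix}G_j\\\vdots\\ G_0\end{pmatrix}$ and $\begin{pmatrix}G_\delta\\\vdots\\ G_{\delta-j}\end{pmatrix}$. After pulling the nonzero scalars $a_i^r$ out column by column, each such $(j{+}1)\times n$ matrix becomes a Vandermonde-type matrix in the columns: the $i$-th column is, up to scalar, $\bigl(\binom{r}{0}c^{(i-1)r},\binom{r}{1}c^{(i-1)(r-1)},\dots\bigr)$ or a contiguous block thereof, and the powers $c^{(i-1)(r-\ell)}$ for varying $\ell$ are exactly the entries of a generalized Vandermonde. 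Because the $c^{i-1}$ are pairwise distinct (as $c$ is primitive and $n\le q-1$ is forced by $\delta<n\le$ number of available points) and the binomial row-scalings are nonzero, every maximal minor of these matrices is a nonzero multiple of a genuine Vandermonde determinant, hence nonzero; thus each stacked matrix has full rank $j+1$ and every $(j{+}1)\times(j{+}1)$ submatrix is invertible, which is precisely the MDS condition for an $(n,j{+}1)$ code. With all hypotheses of Theorem \ref{linearG} in place, the conclusion follows.

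The main obstacle is the binomial-coefficient nonvanishing: the argument needs $\binom{r}{j}\ne 0$ in $\F_q$ for all $0\le j\le r$, equivalently (by Lucas' theorem) that $r$ be such that no carry obstruction occurs — in practice this is guaranteed when $r<p=\operatorname{char}\F_q$, and the paper's running setup should be invoked to supply this. A secondary technical point is the Vandermonde identification for the "top" stacks $\begin{pmatrix}G_\delta\\\vdots\\ G_{\delta-j}\end{pmatrix}$: there the exponents of $c$ run over a shifted contiguous range, so I would factor out the common power $c^{(i-1)(r-j)}$ from column $i$ before recognizing the residual matrix as Vandermonde in $c^{i-1}$; this is routine but must be done carefully to see that the extracted scalars are units. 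Everything else — that removing column scalars preserves both the code and its weight profile, and that a one-row code with no zero entries meets the Singleton bound — is immediate, and the canonicity of $\mathcal G$ is a free byproduct of Theorem \ref{linearG} together with Definition \ref{def:deg2}, since a single row of degree $r=\delta$ is automatically a minimal basic encoder.
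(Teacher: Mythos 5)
Your proposal is correct and follows essentially the same route as the paper: both verify the hypotheses of Theorem \ref{linearG} via the polynomial decomposition $G_j={r\choose j}\bigl(a_1^r,\,a_2^rc^{r-j},\dots,a_n^rc^{(n-1)(r-j)}\bigr)$ and the MDS property of the resulting evaluation (Vandermonde-type) codes, except that the paper dismisses the latter as ``clear'' where you actually supply the column-scaling and Vandermonde-minor argument. Your caveat about needing ${r\choose j}\neq 0$ in $\F_q$ (e.g.\ $r<\operatorname{char}\F_q$) and the $c^{i-1}$ pairwise distinct (so $n\leq q-1$) is well taken: the paper's proof silently assumes both, and without the first the row $G_j$ could vanish, breaking the rank and MDS hypotheses of Theorem \ref{linearG}.
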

\begin{proof}
$\mathcal G$ is clearly canonical. 

The memory and the degree $\delta$ of $CGC(n,1)$ are equal to $r$, so that the generalized Singleton bound is  $n(r+1)$.

The polynomial decomposition of $\mathcal G$ is
$$
 \begin{aligned}
 &\mathcal G=G_0+G_1z+G_2z^2+\cdots+G_rz^r\,, \text{ where}\\
 &G_j={r \choose j}\begin{pmatrix}a_1^r&a_2^rc^{r-j}&a_3^rc^{2(r-j)}&\cdots&a_n^rc^{(n-1)(r-j)}\end{pmatrix}\,,\text{  with $j=0,1,\dots,r$}\,,
 \end{aligned}
 $$
It is clear that for all $0\leq j\leq r$ the linear codes $G_j\,$, $\begin{pmatrix}G_j\\\vdots\\G_{0}\end{pmatrix}\,$, and $\begin{pmatrix}G_r\\\vdots\\G_{r-j}\end{pmatrix}$
are  MDS evaluation codes with  Hamming distances equal to $n$, $n-j$, and $n-j$, respectively. Then, as we proved in Theorem \ref{linearG}, the convolutional Goppa code $CGC(n,1)$ is MDS.
\end{proof}

\begin{thm} \label{MDS1} Let us consider the case when  $s=0$ (so that $\Gamma=\langle1+t+t^2+\cdots t^r\rangle$),  all  $b_i$'s are equal, $b_i=b$, and $a_i={a}^{i-1}$ where ${a}$ is  an  element of $\F_q$  with $\ord(a)\geq n$.
Then, the generator matrix of the code $CGC(n,1)$ 
$$
\mathcal G=\begin{pmatrix}
\displaystyle\sum_{i=0}^r (z+b)^i & \displaystyle\sum_{i=0}^r (az+b)^i & \ldots & \displaystyle\sum_{i=0}^r (a^{n-1} z+b)^i
\end{pmatrix}\,,
$$
is canonical and the code $CGC(n,1)$ is an MDS convolutional code of degree $r$ and free distance $n(r+1)$.
\end{thm}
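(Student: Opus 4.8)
The plan is to reduce Theorem~\ref{MDS1} to Theorem~\ref{linearG}, exactly as was done for Theorem~\ref{MDS}. First I would write the polynomial decomposition of the generator matrix. Since the $i$-th entry of $\mathcal G$ is $\sum_{\ell=0}^r (a^{i-1}z+b)^\ell$, expanding each power by the binomial theorem and collecting the coefficient of $z^j$ gives
$$
\mathcal G=G_0+G_1z+\cdots+G_rz^r,\qquad
G_j=\Bigl(\,c_j\,,\ c_j\,a^{j}\,,\ c_j\,a^{2j}\,,\ \ldots\,,\ c_j\,a^{(n-1)j}\,\Bigr),
$$
where $c_j=\sum_{\ell=j}^r\binom{\ell}{j}b^{\ell-j}$ is a scalar independent of the column index. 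The crucial observation is that $c_j\neq 0$ for every $0\le j\le r$: one recognizes $\sum_{\ell=j}^r\binom{\ell}{j}b^{\ell-j}=\binom{r+1}{j+1}$ when $b=1$, and more generally $c_j$ is the coefficient of $z^j$ in $\sum_{\ell=0}^r(z+b)^\ell=\frac{(z+b)^{r+1}-1}{z+b-1}$, so $c_r=1\neq0$ and an easy induction (or a direct residue/generating-function argument) shows all $c_j$ are nonzero in $\F_q$. Granting this, $G_j=c_j\,(1,a^j,a^{2j},\ldots,a^{(n-1)j})$ is, up to the nonzero scalar $c_j$, a row of a Vandermonde-type matrix built from the distinct elements $1,a,a^2,\ldots,a^{n-1}$ (distinct precisely because $\ord(a)\ge n$), hence defines an MDS $(n,1)$ evaluation code with Hamming distance $n$.

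Next I would verify the two families of stacked matrices required by Theorem~\ref{linearG}. The matrix $\begin{pmatrix}G_j\\\vdots\\G_0\end{pmatrix}$ has rows proportional to $(1,a^m,a^{2m},\ldots,a^{(n-1)m})$ for $m=j,j-1,\ldots,0$; scaling each row by $c_m^{-1}$ turns it into the Vandermonde matrix with nodes $1,a,\ldots,a^{n-1}$ and exponents $0,1,\ldots,j$. Since the nodes are distinct, every $(j+1)\times(j+1)$ minor is a nonzero (generalized) Vandermonde determinant, so this is an $(n,j+1)$ MDS code with distance $n-j$. The matrix $\begin{pmatrix}G_r\\\vdots\\G_{r-j}\end{pmatrix}$ is handled identically: after rescaling rows it is the Vandermonde matrix with the same nodes and exponents $r-j,r-j+1,\ldots,r$, and a generalized Vandermonde with a common exponent shift still has all maximal minors nonzero, giving again an $(n,j+1)$ MDS code with distance $n-j$.

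With both hypotheses of Theorem~\ref{linearG} checked for all $0\le j\le r$, that theorem applies verbatim: it forces $\delta=r<n$, identifies $\mathcal G=G_0+G_1z+\cdots+G_rz^r$ as a (basic, in fact canonical) generator matrix, and yields $\df(CGC(n,1))=n(r+1)$, which is the generalized Singleton bound for parameters $(n,1,r)$ since $(n-1)(\lfloor r/1\rfloor+1)+r+1=(n-1)(r+1)+r+1=n(r+1)$. Thus $CGC(n,1)$ is MDS of degree $r$ and free distance $n(r+1)$, and canonicity of $\mathcal G$ follows because its single row has degree exactly $r=\delta$. The one genuinely new point beyond the pattern of Theorem~\ref{MDS} — and the step I expect to be the main obstacle — is proving $c_j\neq0$ for all $j$; once that scalar non-vanishing is secured, everything else is the same Vandermonde bookkeeping already used in Theorem~\ref{MDS}. (In characteristic dividing some $\binom{r+1}{j+1}$ one must be slightly careful: the correct statement is that $c_j$ equals the $z^j$-coefficient of $\bigl((z+b)^{r+1}-1\bigr)/(z+b-1)$, which is a polynomial identity over $\mathbb Z$ whose relevant coefficients are, after the substitution, units in $\F_q$ — this is where the hypothesis on $b$ implicitly enters and is worth spelling out.)
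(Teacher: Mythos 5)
Your proposal follows the paper's own proof almost line for line: the same polynomial decomposition $G_j=c_j\begin{pmatrix}1&a^j&\cdots&a^{(n-1)j}\end{pmatrix}$ with $c_j=\sum_{m=j}^{r}\binom{m}{j}b^{m-j}$, the same identification of the stacked matrices $\begin{pmatrix}G_i\\\vdots\\G_{i-j}\end{pmatrix}$ as row-scaled, exponent-shifted Vandermonde matrices on the distinct nodes $1,a,\dots,a^{n-1}$, and the same reduction to Theorem \ref{linearG}. Where you go beyond the paper is in isolating the one point the published proof silently assumes, namely that $c_j\neq 0$ for every $0\leq j\leq r$. You are right that this is the crux; unfortunately your justification of it does not work, and in fact the claim is false without an additional hypothesis on $b$.

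There is no induction or generating-function argument that can show all $c_j$ are nonzero, because they can vanish. Take $q=5$, $n=4$, $r=2$, $a=2$ (so $\ord(a)=4\geq n$) and $b=2$. Then $c_1=\binom{1}{1}+\binom{2}{1}b=1+2b=5=0$ in $\F_5$, so $G_1=0$: the matrix $\begin{pmatrix}G_1\\G_0\end{pmatrix}$ has rank $1$, is not an $(n,2)$ MDS code, and the hypotheses of Theorem \ref{linearG} fail. Worse, the conclusion fails too: here $\mathcal G=G_0+G_2z^2$ with $G_0=(2,2,2,2)$ and $G_2=(1,4,1,4)$, the matrix is still basic of degree $r=2$ (the gcd of $2+z^2$ and $2+4z^2$ is $1$, and $c_r=1$ always), yet the codeword $\mathcal G$ itself has weight $2n=8<12=n(r+1)$, so the code is not MDS. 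Thus the step you flag as "the main obstacle" is not merely unproved — it is the point at which the theorem, as stated, needs an extra hypothesis (for instance $b=0$, where every $c_j=1$ and one recovers the Gluesing-Luerssen--Langfeld codes of the Remark, or more generally any $b$ for which all coefficients of $\bigl((z+b)^{r+1}-1\bigr)/(z+b-1)$ are nonzero in $\F_q$). Your closing parenthetical appeals to a "hypothesis on $b$" that the statement does not actually contain; once such a hypothesis is added, the rest of your Vandermonde bookkeeping and the appeal to Theorem \ref{linearG} are correct and coincide with the paper's argument.
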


\begin{proof}
If one denotes  $c_j= \sum_{m=j}^r {m \choose j} b^{m-j}$, $0\leq j\leq r$, the polynomial decomposition of $\mathcal G$ is
$$
 \begin{aligned}
 &\mathcal G=G_0+G_1z+G_2z^2+\cdots+G_rz^r\,, \text{ where}\\
 &G_j=c_j\begin{pmatrix}1&a^j&a^{2j}&\cdots&a^{(n-1)j}\end{pmatrix}\,,\text{  with $j=0,1,\dots,r$}\,.
 \end{aligned}
 $$
 
 The linear codes
$
L_{j,i}=\begin{pmatrix}G_i\\G_{i-1}\\\vdots\\G_{i-j}\end{pmatrix}, 0\leq j\leq i\leq r\,,
$
are also MDS evaluation codes with Hamming distance $d(L_{j,i})=n-j$, and hence $CGC(n,1)$ is a MDS convolutional Goppa code.
\end{proof}

\begin{remark}
In the case $b_i=0$, a generator matrix of the above MDS code is:
$$
\mathcal G=
\displaystyle\sum_{i=0}^r z^i \begin{pmatrix} 1& a^i & a^{2i}\ldots & a^{(n-1)i}
\end{pmatrix}
$$
and we obtain the class of one dimensional MDS convolutional codes of parameters $(n,1,r)$ constructed  by Gluesing and Langfeld \cite{HGBL06}.
\end{remark}

\medskip

\subsection{Classification of one-dimensional Convolutional Goppa Codes over the projective line}\label{sec:class}
\
\smallskip

Each one-dimensional convolutional Goppa code over $\Pp^1_{\F_q(z)}$ is defined by a subspace $\Gamma$ of dimension one of the vector space $L(G)$. We can therefore identify the set of one-dimensional convolutional Goppa codes with the projective space $\Pp(L(G))$ which is a variety of dimension $r-s$ over $\F_q$ since $L(G)=\langle t^s,\dots,t^r\rangle$.

Explicitly, each $CGC$ of dimension one is given by a generator matrix defined by:
$$
 \begin{aligned}
&\Gamma=\langle \lambda_st^s+\cdots+\lambda_r t^r\rangle\xrightarrow{\mathcal G} \F_q(z)^n\,,\,\lambda_i\in\F_q\\
&\mathcal G=\begin{pmatrix}
\displaystyle\sum_{i=s}^r \lambda_i(a_1z+b_1)^i & \ldots & \displaystyle\sum_{i=s}^r \lambda_i(a_n z+b_n)^i
\end{pmatrix}
 \end{aligned}
$$

Let us consider the case $s=0$. Here $L(G)=\langle 1,t,\dots, t^r\rangle$ and the set of $CGC$ of dimension one can be identified with the projective space $\Pp^r_{\F_q}$ of dimension $r$.

The condition for a code to be MDS is an open condition in $\Pp^r_{\F_q}$ (\cite{RoSm99} Lemma 4.1 and proof Theorem 2.10).
Since we have proved in Theorem \ref{MDS1} the existence of one dimensional convolutional Goppa codes of the MDS type, the set of MDS Goppa Codes of dimension one is a dense open subset of $\Pp^r_{\F_q}$ (considering $\Pp^r_{\F_q}$ as an algebraic variety). Essentially, this means that almost all one-dimensional $CGC$ are of MDS type.

In a forthcoming paper we shall give an explicit characterization of $CGC$ of dimension one that are of MDS type.




%

%


\end{document}